
\documentclass[openacc]{rsproca_new}

\usepackage{tikz}
\usepackage{subfig}
\usetikzlibrary{shapes.geometric,calc}

\newcommand{\be}{\begin{equation}}
\newcommand{\en}{\end{equation}}

\def\d{{\rm d}}
\def\uv{{\boldsymbol u}}

\def\rv{{\boldsymbol r}}
\def\fv{{\boldsymbol f}}

\def\xv{{\boldsymbol x}}
\def\yv{{\boldsymbol y}}
\def\nv{{\boldsymbol n}}

\def\dv{{\boldsymbol d}}

\def\lv{{\boldsymbol l}}

\def\ev{{\boldsymbol e}}
\def\Bv{{\boldsymbol B}}

\def\Cv{{\boldsymbol C}}
\def\Jv{{\boldsymbol J}}

\def\Fv{{\boldsymbol F}}
\def\Av{{\boldsymbol A}}
\def\Tv{{\boldsymbol T}}

  \def\Nv{{\boldsymbol N}}
    
    \def\Gv{{\boldsymbol G}}
 \def\Wv{{\boldsymbol W}}

\def\cl{{\mathcal{L}}}

\def\nvec{\Bv_{f\perp}}
\def\grad{\boldsymbol\grad}

\def\wh{\mathcal{H}}

\def\flh{{\mathcal{H}}}
\def\flw{{\mathcal{L}}}

\def\grad{{\rm grad}\, }

\def\bgamma{{\boldsymbol\gamma}}
\newcommand{\deriv}[2]{\frac{\d #1}{\d #2}}
\newcommand{\pder}[2]{\frac{\partial #1}{\partial #2}}

\newtheorem{theorem}{\bf Theorem}[section]


\begin{document}

\title{Magnetic winding: what is it and what is it good for?}

\author{
C. Prior$^{1}$ and D. MacTaggart$^{2}$}

\address{$^{1}$Department of Mathematical Sciences, Durham University, Durham, DH1 3LE, UK\\
$^{2}$School of Mathematics and Statistics, University of Glasgow, Glasgow G12 8QQ, UK\\
}

\subject{magnetohydrodynamics, topological fluid dynamics}

\keywords{magnetohydrodynamics, magnetic topology, helicity, winding}

\corres{D. MacTaggart\\
\email{david.mactaggart@glasgow.ac.uk}}

\begin{abstract}
Magnetic winding is a fundamental topological quantity that underpins magnetic helicity and measures the entanglement of magnetic field lines. Like magnetic helicity, magnetic winding is also an invariant of ideal magnetohydrodynamics. In this article we give a detailed description of what magnetic winding describes, how to calculate it and how to interpret it in relation to helicity. We show how magnetic winding provides a clear topological description of magnetic fields (open or closed) and we give examples to show how magnetic winding and helicity can behave differently, thus revealing different and imporant information about the underlying magnetic field.  
\end{abstract}


\begin{fmtext}
\section{Introduction}
\label{S:1}
The title of this paper pays homage to the now classic article by Finn and Antonsen (FA) \cite{finn1985magnetic} which, together with the seminal work of Berger and Field (BF) \cite{berger1984topological}, introduced \emph{relative magnetic helicity} - an important topological invariant of ideal magnetohydrodynamics (MHD). The ``what is it'' of FA's title describes how helicity can be defined for a magnetic field with non-tangential components on domain boundaries. BF show that by extending the magnetic field so that it becomes closed outside the domain (adding a closure), a relative measure of (gauge-invariant) helicity can be found which compares two different magnetic fields with the same bounday conditions and closure. The general formula for relative helicity, that is most widely used today, is presented in FA. For a (simply connected) domain $\Omega$, the relative magnetic helicity $H_R$ is given by
\be\label{hr}
H_R = \int_{\Omega}(\Av+\Av')\cdot(\Bv-\Bv')\,\d^3x,
\en
where $\Bv$ and $\Bv'$ are divergence-free fields (magnetic fields) with the same boundary conditions on $\partial\Omega$ with
\end{fmtext}
\maketitle
\noindent $\Bv=\nabla\times\Av$ and $\Bv'=\nabla\times\Av'$.

The ``what is it good for'' is described in FA for fusion applications. However, equation (\ref{hr}) has been used heavily in solar physics to understand the topological properties of magnetic fields in the solar atmosphere. A recent review of calculating relative helicity in solar applications can be found in \cite{pariat2020chapter}.

Solar observations cannot determine $H_R$ directly as it is presented in equation (\ref{hr}). This is because three-dimensional (3D) information about the magnetic field is not available throughout the solar atmosphere. This information can only be found at the solar boundary, the photosphere, and so a full 3D magnetic field needs to be constructed based on a particular model. The chosen model is normally a force-free field \cite{wheatland2000optimization,guo2012modeling,wiegelmann2012solar}, which is only strictly justified in the corona and not in the photosphere. That being said, other non-force-free models have also been considered \cite{wiegelmann2017mhs,neukirch2019mhs}.

Another way that the relative helicity can be determined from solar observations is to integrate the rate of change of relative helicity through the photosphere (where the magnetic field components can be observed). As shown in BF, the rate of change of relative helicity through a horizontal plane $P$ (representing the photosphere), in ideal MHD,  can be written as
\be\label{hr_rate}
\frac{\d H_R}{\d t} = 2\int_P[(\Av'\cdot\Bv)u_z-(\Av'\cdot\uv)B_z]\, d^2x,
\en
where $\ev_z$ is normal to $P$, $\uv$ is the velocity of the flow and $\Av'$ satisfies the following chosen properties: $\nabla\times\Av'\cdot\ev_z = B_z$ and $\nabla_{\perp}\cdot(\ev_z\times\Av'\times\ev_z) = 0$ on $P$, where $\nabla_{\perp}$ refers to the horizontal gradient on $P$.

Although equation (\ref{hr_rate}) is not the most general expression of the rate of relative helicity through a plane, it does have particular advantages {that will lead us to uncover the underlying topological structure of relative helicity}. Before stating what these advantages are, let us quickly recap some useful topological insights concerning classical helicity. Moffat \cite{moffatt1969degree} showed that classical helicity,
\be\label{class_gen}
H=\int_{\Omega} \Av\cdot\Bv\,\d V,
\en
where $\Bv\cdot\nv=0$ on $\partial\Omega$, has a topological interpretation in terms of the Gauss linking number. Consider two linked loops (a pair of closed and linked magnetic field lines) given by $\d \xv/\d s =\Bv(\xv)$ and $\d \yv/\d s =\Bv(\yv)$ (the following can be extended to $n$ loops and also ergodic field lines, but two loops will suffice for this demonstration). Then the Gauss linking number \cite{ricca2011gauss} is given by
\be\label{gauss}
Lk(\xv,\yv)= \frac{1}{4\pi}\int_{\xv(s)}\int_{\yv(\sigma)}\frac{\d\xv}{\d s}\cdot\frac{\d\yv}{\d \sigma}\times\frac{\xv-\yv}{|\xv-\yv|^3}\,\d s\,\d\sigma.
\en
If around each loop we identify a solid magnetic torus, as in \cite{ricca2008knots}, then since the magnetic field is tangential to the toroidal boundary, we can consider the domain to be $\mathbb{R}^3$, where the magnetic fields in the tori are `extended by zero' outside of the tori volumes \cite{mactaggart2019multi}. Thus equation (\ref{gauss}) can lead to classical helicity formula
\be\label{class}
H = \frac{1}{4\pi}\int_{\mathbb{R}^3}\int_{\mathbb{R}^3}\Bv(\xv)\cdot\Bv(\yv)\times\frac{\xv-\yv}{|\xv-\yv|^3}\, \d^3x\, \d^3y.
\en
Equation (\ref{class}) is {equivalent to} equation (\ref{class_gen}) {but is written in terms of a} particular gauge, namely $\nabla\cdot\Av=0$. However, the purpose of ``deriving'' equation (\ref{class}) in the way above is to show that helicity has an underlying topological structure (here measured by the Gauss linking number) that is not dependent on the magnetic field strength. The magnetic field can be constructed (at least in a formal way) from the underlying geometric structure of the field lines, whose topology is described by Gauss linkage. Once the loops are turned into genuine magnetic field volumes, the classical helicity $H$ becomes a measure that \emph{combines} magnetic field strength \emph{and} topological information (linking).

Returning to equation (\ref{hr_rate}), although it is not the most general expression of the relative helicity rate {(in the sense that it is based on a particular gauge)}, the choices made for $\Av'$ reveal an underlying topological structure similar to that for classical helicity described above. Berger \cite{berger1986braids} was the first to notice this underlying structure and his ideas were developed further by Prior and Yeates \cite{prior2014helicity}. We refer to this underlying topological structure as \emph{magnetic winding}. 

Just as classical helicity can be thought to be based on the more fundamental property of Gauss linkage, in this work, we will show that relative helicity (through equation \ref{hr_rate}) can be based on the more fundamental property of winding. We show that magnetic winding can provide the underlying topological structure for both open and closed magnetic fields and we will give examples of how this property can be used in applications. {In particular, we will show that the calculation of magnetic winding in solar flux emergence can be used to identify regions of complex field line topology (see also \cite{prior2019interpreting,mactaggart2020helicity}). This result can then be used to aid the prediction of solar eruptions, which depend on non-trivial field line topology.}

The outline of the paper is as follows: {we first will define magnetic winding, which provides the underlying topological descriptions for quantities to be introduced later. We then introduce \emph{winding helicity} and show how this provides a link bewteen winding, relative helicity and relative helicity flux. We then renormalize the helicity expressions to provide analogous winding experssions that are not dependent on the magnetic field strength. We then discuss properties of magnetic winding} and give demonstrations of how it can be used, in conjunction with magnetic helicity, to understand the topological evolution of magnetic fields.   

\section{Magnetic winding - what is it?}\label{wwii}
\subsection{Basic definitions}
In order to show how winding can describe the underlying topology of magnetic fields, we begin with some basic definitions. Consider a Euclidean space with the standard Cartesian basis $\{\ev_1,\ev_2,\ev_z\}$ (this notation highlights the importance of the $\ev_z$-direction that will become apparent later). Consider, further a horizontal plane $P$ with normal $\ev_z$. Let  $\bgamma: [a,b]\rightarrow\mathbb{R}^2\setminus\{\boldsymbol{0}\}$ be a suitably smooth (Lipschitz continuity will be assumed here) plane parametric curve on $P$ given in Cartesian coordinates by $\bgamma(t) = (x(t),y(t))$. In polar coordinates, the path can be written as
\be
\bgamma(t) = r(t)(\cos\theta(t),\sin\theta(t)).
\en
The distance from the (chosen) origin is $r(t)=\sqrt{x(t)^2+y(t)^2}$ and the angle $\theta(t)$ can be multivalued for any $t$ (by multiples of $2\pi$). In order to define $\theta(t)$ to be a unique function, we first choose a reference angle $\theta_a$ (for $t=a$). It can be checked that
\be
\frac{\d\theta}{\d t} = \dot{\theta}(t) = \frac{-y(t)\dot{x}(t)+x(t)\dot{y}(t)}{x(t)^2+y(t)^2}.
\en 
Therefore, the angle function with initial condition $\theta_a$ and the above derivative is
\be\label{theta}
\theta(t) = \theta_a + \int_a^t\frac{-y(u)\dot{x}(u)+x(u)\dot{y}(u)}{x(u)^2+y(u)^2}\,\d u.
\en
With the angle function $\theta(t)$, the total (signed) change in angle of the path $\bgamma$ is $\theta(b)$-$\theta(a)$. The \emph{winding number} can now be defined as the net number of times $\bgamma$ goes around the origin, where anti-clockwise motion is positive and clockwise motion negative. The winding number of $\bgamma$ around the origin is thus given by
\be
\cl(\bgamma,\boldsymbol{0}) = \frac{1}{2\pi}\int_{\bgamma}\frac{-y\d{x}+x\d{y}}{x^2+y^2} = \frac{1}{2\pi}\int_a^b\frac{\d\theta(\bgamma)}{\d t}\,\d t.
\en
What we have described here is a standard definition of winding about a point (taken as the origin here although translating to another point is trivial).  Now let us consider the net winding between \emph{two} paths, rather than a path relative to a specific point. 

\begin{figure}
 \centering
\subfloat[]{\includegraphics[width=4cm]{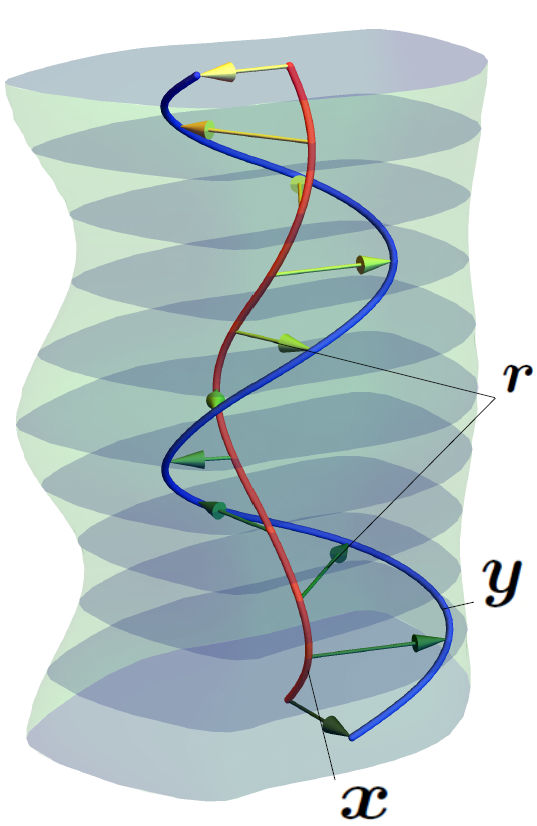}}\quad\quad\quad\subfloat[]{\includegraphics[width=7cm]{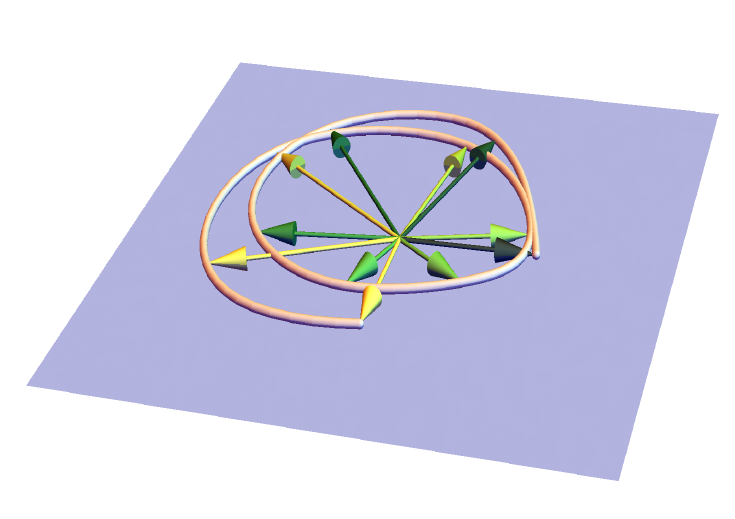}}
\caption{ (a) A domain $\Omega$ consisting of stacked horizontal slices that are orthogonal to $\ev_z$. Two curves (field lines) $\xv$ and $\yv$ are shown. Some slices are shown together with horizontal vectors $\rv$, at each height, connecting $\xv$ and $\yv$. Magnetic field in $\Omega$ is tangent to the  side of the domain but can be arbitrary on the top and bottom horizontal boundaries. For the calculation of helicity and winding fluxes, each slice can represent the boundary at a different instant of time rather than a location in space. (b) The $\rv$ vectors projected onto one horizontal plane. The resulting parametric curve allows for the calculation of  the pairwise winding of two field lines.}\label{fig_domain}
\end{figure}

Consider two parametric curves $\bgamma,\bgamma': [a,b]\rightarrow\mathbb{R}^2$ on $P$ that are distinct for all $t\in [a,b]$. For each $t$, set $\rv(t) = \bgamma(t)-\bgamma'(t)$. Therefore, the winding number between two curves is just the net rotation of $\rv$ about the origin, i.e.
\be\label{w2c}
\cl(\bgamma,\bgamma')=\cl(\rv,\boldsymbol{0}) = \frac{1}{2\pi}\int_{a}^{b}\frac{\d\theta(\rv)}{\d t}\,\d t.
\en
It is now straightforward to generalize the winding in equation (\ref{w2c}) to describe the net rotation of two three-dimensional curves (which we will shortly identify with magnetic field lines) about each other. Consider two smooth and distinct parametric curves $\xv, \yv:[0,h]\rightarrow\mathbb{R}^3$  that are monotonically increasing the $\ev_z$-direction, as shown in Figure \ref{fig_domain}(a). The vertical range of the curves is $[0,h]$.  For each $z\in[0,h]$, we define the horizontal vector $\rv(z)=\xv(z)-\yv(z)$. We then project each $\rv(z)$ for all $z\in[0,h]$ onto one horizontal plane. Due to the smoothness and continuity of the curves $\xv$ and $\yv$, the resulting projection of $\rv(z)$ for $z\in[0,h]$ is a suitably smooth planar parametric curve, as indicated in Figure \ref{fig_domain}(b). Hence, we can make use of the expression for winding given in equation (\ref{w2c}) and write
\be\label{wind1}
\cl(\xv,\yv) = \frac{1}{2\pi}\int_0^h\frac{\d}{\d z}\theta(\xv(z),\yv(z))\,\d z.
\en
Note that performing the integration in equation (\ref{wind1}) leads to
\be
\cl(\xv,\yv) = \frac{1}{2\pi}[\theta(\xv(h),\yv(h))-\theta(\xv(0),\yv(0))] + n,
\en
where $n\in\mathbb{Z}$ is the number of full rotations (in the sense described above) of the projected $\rv$ around the origin. If the positions of the curves at $z=0$ and $z=h$ remain fixed, then any smooth deformation of the curves $\xv$ and $\yv$ that does not result in cuts or reconnection, preserves the value of $n$ and, hence, $\cl$ \cite{berger2006writhe}. 

If we now consider the curves $\xv$ and $\yv$ to follow magnetic field lines, $\cl$ represents a topological constraint on the magnetic field. Just as the Gauss linking number describes the pair-wise linkage of closed curves, the winding number $\cl$ describes the pair-wise winding of two open curves.

From the above discussion, the winding represents a two-dimensional description of the topology of curves (as opposed to the inherently three-dimensional description of Gauss linkage). Although this is fine for curves monotonically increasing in the $\ev_z$-direction, more work is required to define an adequate measure of winding for curves that bend backwards and are not monotonically increasing in the $\ev_z$-direction. Berger and Prior \cite{berger2006writhe} found such a measure by splitting the curves into regions separated by turning points. Suppose $\xv$ and $\yv$ have $n$ and $m$ distinct turning points respectively, that is points where $\d x_z/\d z = 0$ $(\xv\cdot\ev_z=x_z)$ or  $\d y_z/\d z = 0$ $(\yv\cdot\ev_z = y_z)$. Now split $\xv$ into $n+1$ regions and $\yv$ into $m+1$ regions. In each region, curve sections $\xv_i$ and $\yv_j$ share a mutual $z$-range $[z_{ij}^{\rm min}, z_{ij}^{\rm max}]$. Hence, in each section, equation (\ref{wind1}) can be applied and the total winding can be written as
\be\label{wind2}
\cl(\xv,\yv) = \sum_{i=1}^{n+1}\sum_{j=1}^{m+1}\frac{\sigma(\xv_i)\sigma(\xv_j)}{2\pi}\int_{z_{ij}^{\rm min}}^{z_{ij}^{\rm max}} \frac{\d}{\d z}\theta(\xv_i(z),\yv_j(z))\,\d z,
\en
where $\sigma(\xv_i)$ is an indicator function marking where the curve section $\xv_i$ moves up or down in $z$, i.e.
\be\label{sigma}
\sigma(\xv_i) = \left\{\begin{array}{ccc}
1 & {\rm if} \quad & \d x_z/\d z > 0, \\
-1 & {\rm if} \quad &\d x_z /\d z < 0 \\
0 & {\rm if} \quad &\d x_z /\d z = 0. \end{array}\right.
\en
Again, if the curves are fixed on the horizontal boundaries at $z=0$ and $z=h$, and are deformed smoothly without cuts or reconnection, the generalized winding in equation (\ref{wind2}) is conserved.

It was noted in \cite{berger2006writhe,prior2014helicity} that if the domain containing closed curves is divided into horizonal slices, as we have described in this section, the Gauss linkage of these curves is equal to the winding given in equation (\ref{wind2}). Thus, winding can be considered to be a more extensive topological description of field line entanglement than the Gauss linking number. We will now describe how winding forms the fundamental topological description of the helicity of open magnetic fields.

\subsection{Winding gauge}
As mentioned earlier, relative helicity is a common measure of helicity for open magnetic fields. For a restricted (but still very general) domain, the magnetic helicity of open magnetic fields can be expressed without the need of a reference field \cite{prior2014helicity}. Consider a simply connected domain $\Omega$ constructed of ``stacked'' horizontal planes, as portrayed in Figure \ref{fig_domain}(a). Field lines can be connected to the top and bottom horizontal boundaries (or neither for closed field) and are tangent to the side boundaries. 
%
%
%
%
%
%
%
This domain can also be turned into the half-space by pushing the top and side boundaries to infinity and assuming that the magnetic field decays suitably quickly with distance from its source on the bottom plane. 

Consider a particular gauge, known as the \emph{winding gauge}, 
\be
\Av^W =\frac{1}{2\pi}\int_{S_z}\Bv(y_1,y_2,z)\times \frac{\rv}{|\rv|^2} \d^2y,
\en 
where $\rv = (x_1-y_1, x_2-y_2,0)$ and $S_z$ is a horizontal surface at height $z$. $\Av^W$ is a suitable vector potential for $\Bv$ in $\Omega$. Also, the winding gauge satisfies $\nabla_{\perp}\cdot\Av^W=0$ on any surface $S_z$. This gauge can be thought of as a two-dimensional equivalent of the usual Coulomb gauge, which Moffatt used to show that the topological structure of closed-field helicity is described by Gauss linkage. As its name suggests, the winding gauge leads to a description of open-field helicity whose underlying topological structure is encoded in the winding of its field lines.

Prior and Yeates \cite{prior2014helicity} defined a \emph{winding helicity} which can be written as
\be\label{windh}
H^W(\Bv) = \int_{\Omega}\Av^W\cdot\Bv\d V = \frac{1}{2\pi}\int_0^h\int_{S_z}\int_{S_z}\frac{\d}{\d z}\theta(\xv,\yv)B_z(\xv)B_z(\yv)\d^2x\,\d^2y\,\d z.
\en
Notice from the right-hand side of equation (\ref{windh}) that the winding helicity is the average pair-wise winding of field lines (see equation (\ref{wind1})) \emph{weighted} by magnetic flux. Again, this topological description of open-field helicity is analogous to the closed-field case (see equations (\ref{gauss}) and (\ref{class})). For applications to be discussed later, it will be useful to consider magnetic helicity as having a geometric structure (field line curves) with a topological description (winding) ``clothed'' by magnetic field. 

There are immediate consequences for the value of helicity due to the combined topology-field strength description given above. First, the helicity can be zero even if field strength is large everywhere but the field line topology has an equal number of positive and negative windings. Second, even if a field has a highly complex (non-cancelling) topology, the value of helicity can be small if the field strength is weak. From these simple observations, it would be useful to calculate the field line topology as well as the helicity in order to get a better picture of the total magnetic field topology. We will return to this important point later.

A related invariant of ideal MHD is the \emph{field line helcity} \cite{berger1988energy} which, for a field line curve $\xv$ and general vector potential $\Av$, is written as
\be\label{flh_gen}
\flh(\Bv) = \int_{\xv}\Av\cdot\d \lv,
\en
By applying the winding gauge, we can show that the field line winding represents the average winding of all other field lines with the field line in question, weighted by magnetic flux. For a magnetic field in the domain $\Omega$, the field line helicity of a curve ${\xv}$, which can be split into $n$ monotonic sections  with ranges $z \in[z_i^{\rm min},z_i^{\rm max}]$ by $n-1$ vertical turning points, can be written as
\be
\label{eqn:flh}
\flh = \int_{\xv}\Av^W\cdot\d \lv =\sum_{i=0}^{n}\int_{z_i^{\rm min}}^{z_i^{\rm max}} \Av^W\cdot\frac{\Bv_i}{|B_{zi}|}\,\d z = \frac{1}{2\pi}\sum_{i=0}^{n}\int_{z_i^{\rm min}}^{z_i^{\rm max}}\int_{S_z}\frac{\d}{\d z}\theta(\xv_i,\yv)\sigma(\xv_i)B_z(\yv)\,\d^2y\,\d{z}.
\en

\subsection{Relative helicity}\label{sec_relhel}
There is always a trade-off in helicity calculations between generality and topological information. Although relative helicity (equation (\ref{hr})) can be written in any suitable gauge, it is difficult to attach any clear topological interpretation to a general formulation. Although the winding helicity is not general, in the sense that it is based on a specific gauge, it allows for a much deeper topological interpretation compared to a more general formulation. One clear connection between winding helicity and relative helicity is that in the domain $\Omega$, 
\be
H_R(\Bv,\Bv') = H^W(\Bv) - H^W(\Bv').
\en
A reference field $\Bv'$ can always be found so that $H^W(\Bv')=0$, thus equating the relative and winding helicities \cite{prior2014helicity}. 

The connection between relative helicity and winding helicity does not end there, however. The practical injection of relative helicity, as mentioned in the Introduction, is performed by integrating the input of helicity through a boundary in time. In ideal MHD, the time-integrated rate of change of the relative helicity of any magnetic field passing through a horizontal planar boundary $P$ (or $S_0$ in $\Omega$) in the time range $[0,T]$ can be made equivalent to the winding helicity given in equation (\ref{windh}).

It was shown in \cite{berger1984limit,pariat2005helicity} that the time-integrated input of relative helicity through $P$ {(from equation \ref{hr_rate})} can be written as
\be\label{hr_ti}
H_R = -\frac{1}{2\pi}\int_0^T\int_P\int_P\frac{\d}{\d t}\theta(\xv,\yv)B_z(\xv)B_z(\yv)\,\d^2x\,\d^2y\,\d t.
\en
{The `advantages' of equation (\ref{hr_rate}) that were mentioned earlier are now clear in equation (\ref{hr_ti}), which reveals that the time-integrated helicity flux measures the winding of field lines in time weighted by magnetic flux. Equation (\ref{hr_ti}) was the first expression found that directly connects helicity and winding \cite{berger1986braids}. With the introduction of the winding helicity above, comparing this with the relative helicity flux reveals that they have a common underlying topological structure.}

Upon inspection, equation (\ref{hr_ti}) can be transformed to equation (\ref{windh}) by identifying $z \leftrightarrow t$ and $[0,T]\leftrightarrow[h,0]$. To understand why the order of integration needs to be swapped, consider magnetic field (e.g. a flux tube), initially below $P$, which then rises rigidly through $P$ until a time $T$. At time $T$, the helicity of the magnetic field would be integrated, using equation (\ref{windh}), from $z=0$ up to the maximum height of the emerged magnetic field, at $z=h$ say. Using equation (\ref{hr_ti}), however, the integration is performed in reverse, since the slice corresponding to the top of the magnetic region is counted at $P$ first (it is the first to pass through $P$). This process continues until the last slice, which corresponds to $z=0$ at time $t=T$. 

The identification made above can also be considered as a mapping between $\Omega$ and $\Omega_t$, where the latter domain is the same as the former but with the $z$-coordinate replaced by time. That is, $\Omega_t$ comprises of stacks of $P$ at different times in the range $[0,T]$. If a magnetic field passes through $P$ in a complex manner (e.g. a mixture of emergence and submergence in different parts of $P$), the field integrated in $\Omega_t$ will no longer be equivalent to the magnetic field in $\Omega$, although the winding helicity in $\Omega$ will still be equal to the time-integrated relative helicity in $\Omega_t$. As a simple example, consider the emergence and then complete submergence of a magnetic loop (e.g. magnetic field in a semi-torus). At the end of the submergence, there is no magnetic field in $\Omega$ and so the winding helicity is trivially zero. In $\Omega_t$ the emergence and submergence of every field line creates closed loops which are unlinked with all other loops. Hence, the helicity is also zero in this case.

\subsection{Separating winding and helicity}\label{sec_sepwind}

So far, we have demonstrated that, through the use of the winding gauge, the helicity of open magnetic fields can be interpreted as the field line winding weighted by the magnetic flux. In ideal MHD, however, the winding of the field lines themselves is also an invariant to deformations which vanish on the boundaries. It, therefore, makes sense to seek a purely topological measure of the field lines that is independent of the field strength. In studying flux emergence, \cite{prior2019interpreting,mactaggart2020helicity} defined a time-integrated magnetic winding flux $L_R$\footnote{The symbol $L$ is used for winding (and $L_R$ for winding flux) due to its connection to $Lk$ (see \cite{berger2006writhe}). Capital Roman letters refer to total measures of helicity and winding and calligraphic letters refer to the field line versions of these quantities. All further references to magnetic winding correspond to the definitions in this subsection.}
\be\label{lr_ti}
L_R = -\frac{1}{2\pi}\int_0^T\int_P\int_P\frac{\d}{\d t}\theta(\xv,\yv)\sigma(\xv)\sigma(\yv)\,\d^2x\,\d^2y\,\d t.
\en
Here, the $B_z$ terms from equation (\ref{hr_ti}) have been replaced with corresponding indicator functions $\sigma$. Equation (\ref{lr_ti}) provides topological information about the magnetic field that is unbiased by the magnetic field strength 

We can also define the magnetic winding in $\Omega$ by constructing a purely geometric version of the winding gauge. Consider the tangent vector $\Tv=\Bv/|\Bv|$ to the magnetic field and let $\Gv = \Tv/|T_z|=\sigma\d \xv/\d z$. We define $\Cv$ to be a geometric analogue of the winding gauge as
\be
\Cv = \frac{1}{2\pi}\int_{S_z}\Gv(\yv)\times\frac{\rv}{|\rv|^2}\,\d^2 y.
\en 
Analogous to the calculations in \cite{prior2014helicity}, it can be shown that
\be
\label{eqn:windprod}
\Cv\cdot\Gv = \frac{1}{2\pi}\int_{S_z}\frac{\d}{\d z}\theta(\xv,\yv)\sigma(\xv)\sigma(\yv)\,\d^2y.
\en
Integrating this quantity over the volume of $\Omega$ gives
\be\label{winding_omega}
L = \frac{1}{2\pi}\int_0^h\int_{S_z}\int_{S_z}\frac{\d}{\d z}\theta(\xv,\yv)\sigma(\xv)\sigma(\yv)\,\d^2x\,\d^2y\,\d z.
\en
Similar to the field line helicity, we can also define a field line winding
\be\label{field_line_wind}
\flw(\Bv) = \int_{\xv}\Cv\cdot\d\lv.
\en
The field line winding of curve $\xv$, which can be split into $n$ monotonic sections  with ranges $z \in[z_i^{\rm min},z_i^{\rm max}]$ by $n-1$ turning points,  can be written as
\be
\flw = \frac{1}{2\pi}\sum_{i=0}^{n}\int_{z_i^{\rm min}}^{z_i^{\rm max}}\int_{S_z}\frac{\d}{\d z}\theta(\xv_i,\yv)\sigma(\xv_i)\sigma(\yv)\,\d^2y\,\d z,
\en
which is just the average pair-wise winding of all field lines in $\Omega$ with $\xv$. {Prior and Yeates \cite{prior2018quantifying} showed that $\flw$ can be used to topologically categorize braided magnetic fields, where the field line helicity cannot. This was used to quantify the changing connectivity of magnetic flux rope experiments carried out at the UCLA basic plasma facility.}

\section{Winding - what is it good for?}
In this section we will demonstrate that magnetic winding can provide different and more detailed information about field line topology than the helicity.

\subsection{Resistive magnetostatics}

\subsubsection{Linear force-free fields}
The magnetic induction equation can be written as 
\begin{equation}\label{induction}
\frac{\partial\Bv}{\partial t} = \nabla \times ({\uv}\times\Bv)-\nabla\times(\eta\nabla\times\Bv), 
\end{equation}
where $\eta$ is the magnetic diffusion. Only the second term on the right-hand side of equation (\ref{induction}) can change the values of $H^W$ and $L$ (and their associated field line partners $\flh$ and $\flw$) so we will focus on this term by considering the quasi-static evolution with $\uv=\boldsymbol{0}$. We will demonstrate that different components affect helicity and winding in distinct ways, thus emphasising the importance of separating winding from helicity. For the sake of simplicity, we assume that $\eta$ is constant in what follows.

 We begin with one of the simplest possible current carrying magnetic fields, a linear force-free field, that satisfies $\nabla\times\Bv = \alpha\Bv$ for constant $\alpha$, in the domain $\Omega$ (we set $\mu_0=1$ in this work for convenience).  The top and bottom boundaries are fixed or periodic. The magnetic field remains tangent to  the side boundary. The following theorem tells us that only the helicity decays for such fields.
 
\begin{theorem}\label{thrm1}
The winding helicity and field line helicity of a linear force-free field in $\Omega$ subject to constant magnetic diffusion $\eta$ and no flow obey
\be
H^W[\Bv(t)]=H^W[\Bv(0)]\exp(-2\alpha^2\eta t)\quad{\rm and}\quad\flh[\Bv(t)]=\flh[\Bv(0)]\exp(-\alpha^2\eta t),
\en
where $\alpha$ is the force-free constant. Under the same assumptions, the winding quantities $L[\Bv(t)]$ and $\flw[\Bv(t)]$ are constant. 
\end{theorem}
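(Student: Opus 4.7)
The plan hinges on a single observation: for a linear force-free field, the resistive, flowless induction equation reduces to a pointwise exponential decay of $\Bv$ whose spatial pattern (and hence whose field line geometry) is preserved. Everything in the theorem follows from how the various integrands depend on $\Bv$ versus how they depend only on the field line geometry.

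First, I would verify the pointwise decay law. Using $\uv=\boldsymbol 0$, the induction equation (\ref{induction}) becomes $\pt_t \Bv = -\eta\,\nabla\times(\nabla\times\Bv)$. From $\nabla\times\Bv=\alpha\Bv$ we get $\nabla\times(\nabla\times\Bv)=\alpha\,\nabla\times\Bv=\alpha^2\Bv$, so
\be
\pder{\Bv}{t} = -\eta\alpha^2\Bv,\qquad\text{hence}\qquad \Bv(\xv,t)=\Bv(\xv,0)\,e^{-\eta\alpha^2 t}.
\en
In particular the force-free condition is preserved in time, the direction field $\Bv/|\Bv|$ is independent of $t$, and therefore the integral curves $\xv(z)$ (including the monotonic sections and the signs $\sigma$ of Section \ref{wwii}) are unchanged for all $t\ge 0$. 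The boundary conditions on $\pt\Omega$ (top/bottom fixed or periodic, tangency on the sides) are preserved by this purely multiplicative evolution, so the winding gauge from Section \ref{wwii} remains admissible throughout.

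Second, I would propagate this scaling through the winding gauge. Since $\Av^W$ is linear in $\Bv$, the same exponential factor passes through:
\be
\Av^W(\xv,t) = \Av^W(\xv,0)\,e^{-\eta\alpha^2 t}.
\en
Substituting into $H^W=\int_\Omega \Av^W\cdot\Bv\,\d V$ yields an overall factor $e^{-2\eta\alpha^2 t}$, proving the first claim. For the field line helicity, the curve $\xv$ and its decomposition into monotonic segments are the same as at $t=0$ (by the previous paragraph), so in the expression (\ref{eqn:flh}) only $\Av^W$ carries time dependence, giving a single factor $e^{-\eta\alpha^2 t}$.

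Third, I would treat the two winding quantities. The integrands defining $L$ in (\ref{winding_omega}) and $\flw$ in (\ref{field_line_wind}) involve only $\d\theta/\d z$ along pairs of field lines and the indicator functions $\sigma$. Equivalently, $\Cv$ is constructed from $\Gv=\sigma\,\d\xv/\d z$, which is a property of the direction field only. Since the field lines and their vertical monotonicity structure are frozen in time under the decay $\Bv(\xv,t)=\Bv(\xv,0)e^{-\eta\alpha^2 t}$, both $L$ and $\flw$ are independent of $t$.

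The main subtlety—more a bookkeeping point than a real obstacle—is to justify that the field line decomposition used in (\ref{eqn:flh}) and in the definition of $\flw$ truly does not depend on $t$. This is immediate once the pointwise decay law is established, because an overall positive scalar multiple of $\Bv$ leaves both the integral curves and the sign of $B_z$ along them unchanged; I would state this explicitly before citing the winding-gauge expressions, so that all time dependence is legitimately located in the magnetic-flux weights $B_z(\xv)B_z(\yv)$ (for $H^W$) and $B_z(\yv)$ (for $\flh$), which produce the quoted exponents, while the purely geometric windings $L$ and $\flw$ remain invariant.
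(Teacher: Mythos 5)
Your proposal is correct and follows essentially the same route as the paper: establish the pointwise decay $\Bv(t)=\Bv(0)\exp(-\alpha^2\eta t)$ from the quasistatic induction equation and the Beltrami property, note that the field line geometry (hence $\sigma$ and the monotonic sections) is frozen so $L$ and $\flw$ are constant, and read off the factors $\exp(-2\alpha^2\eta t)$ and $\exp(-\alpha^2\eta t)$ for $H^W$ and $\flh$ from the linearity of $\Av^W$ in $\Bv$. The only cosmetic difference is that the paper supports the persistence of the force-free structure by citing Jette's result, whereas you verify the exponential ansatz directly, which amounts to the same eigenfunction argument.
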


\begin{proof}
We start by proving that  $L[\Bv(t)]$ and $\flw[\Bv(t)]$ are constant under such an evolution. The quasistatic induction equation is
\be\label{ind_static}
\frac{\partial\Bv}{\partial t} = -\nabla\times(\eta\nabla\times\Bv).
\en
Jette \cite{jette1970force} (see also \cite{losurdo1979static}) proved that in resistive magnetohydrostatics, the only force-free fields $\Bv$ that remain force-free in time are those with constant $\alpha$, i.e. linear force-free fields. Using the linear force-free equation,
\be\label{dhwdt}
\nabla\times(\eta\nabla\times\Bv) = \nabla\times(\eta\alpha\Bv)=\alpha^2\eta\Bv.
\en
From this result, the magnetic field behaves as
\be\label{jette}
\Bv(t) = \Bv(0)\exp(-\alpha^2\eta t),
\en
where $\Bv(0)$ is the magnetic field at the start of the quasistatic decay . Therefore, the field line structure of $\Bv$ remains constant for all $t>0$, that is $\Tv(t) = \Tv(0)$, and so $L[\Bv(t)]$ and $\flw[\Bv(t)]$ are fixed in time. Then from equations (\ref{windh}) and (\ref{eqn:flh}) and the fact that 
\be
B_z=B_{z}(0)\exp(-\alpha^2\eta t),
\en
we have
\be
H^W[\Bv(t)]=H^W[\Bv(0)]\exp(-2\alpha^2\eta t)\quad{\rm and}\quad \flh[\Bv(t)]=\flh[\Bv(0)]\exp(-\alpha^2\eta t).
\en
\end{proof}
This particularly simple situation shows that even when there is no flow, the helicity and the winding can behave differently, despite being so intimately related. Therefore, each quantity can provide \emph{different} information on the overall behaviour of the evolving magnetic field.  


\subsubsection{General differences in the decay of $H^W$ and $L$}
We now explore more general differences between the winding and the helicity and relate them to specific physical properties of the magnetic field. To do so we write the curl of the magnetic field as the sum of force-free and Lorentz force ($\Fv_l$) generating components. Using Amp\`{e}re's law $\Jv=\nabla\times \Bv$ and $\Fv_l =\Jv\times\Bv$, it can be checked that
\begin{equation}
\label{curlexp1}
\nabla \times \Bv =\Bv\times\frac{\Fv_l}{\vert\Bv\vert^2}+ \alpha \Bv,\quad \alpha = \frac{(\nabla\times\Bv )\cdot \Bv}{\vert\Bv\vert^2}.
\end{equation}
In the second term, $\alpha$ represents the component of the axial current weighted by the field strength. The parameter $\alpha$ also represents a topological quantity: the mean twisting of the field around the field line passing locally through the point at which the field $\Bv$ is anchored \cite{berger2006writhe}. In the first term, the vector $\Fv_l/\vert\Bv\vert^2$ points along the direction of the Lorentz force but with a magnitude that is the ratio of the Lorentz force strength to the square of the field strength. Thus we can write (\ref{curlexp1}) as
\begin{equation}
\label{curlexp2}
\nabla \times \Bv =\lambda\Bv_{f\perp} + \alpha {\Bv},
\end{equation}
where
\begin{equation}
\nvec  = \Bv\times \hat{\Fv_l},\quad\hat{\Fv_l}=\frac{\Fv_l}{\vert \Fv_l\vert},
\end{equation}
and
\begin{equation}
\lambda  = \frac{\nabla\times\Bv\cdot{\nvec}}{\vert \Bv\vert^2}.
\end{equation}
From a geometric perspective, $\lambda$  measures the rotation of the magnetic field around the direction of the vector $\Bv\times \hat{\Fv_l}$, a vector normal to both the Lorentz force and magnetic field. If we specify the function $\lambda$ then the Lorentz force can be written as 
\begin{equation}
\Fv_l= {\lambda}\nvec\times \Bv = \lambda\vert \Bv\vert^2\hat{\Fv}_l,
\end{equation}
and $\lambda$ can be seen to represent the relative magnitude of the Lorentz force to the magnetic field strength. Thus the representation of the magnetic field's varying local geometry through (\ref{curlexp2}) has two scalar parameters, $\lambda$ and $\alpha$, which represent, respectively, the relative magnitude of the Lorentz force and axial current and hence measure their relative effect on the local geometry of the magnetic field through its curl. 

As shown in the Appendix, the (quasistatic) induction equation can be written in terms of components parallel and perpendicular to the magnetic field,
\begin{align}
\label{eqn:bderiv}
\frac{\partial\Bv}{\partial t}  &=-\eta\left(\Cv_{\perp} + \Cv_{\parallel}\right),\\
\label{eqn:bderiv_par}\Cv_{\parallel} &= \left( \nabla\lambda\cdot \hat{\Fv}_l+\alpha^2+\lambda\omega_b\right)\Bv,\\
 \label{eqn:bderiv_perp}\Cv_{\perp}& = -(\nabla\lambda\cdot\Bv)\hat{\Fv}_l+ \alpha\lambda\nvec +  \nabla\alpha\times{\Bv} + \lambda\omega_f \hat{\Fv}_l+\lambda\omega_{fl}\Bv_{f\perp},
\end{align}
where $\omega_b $ represents the rotation of the Lorentz force vector around a field line, $\omega_f$
represents the rotation of the magnetic field $\Bv$ around the direction of the Lorentz force and $\omega_{fl}$
represents the rotation of the pair $(\Bv,\hat{\Fv}_l)$ around the direction of the field  $\nvec$. The expressions for these scalars can be found in the Appendix.

Before discussing the interpretation of individual terms of $\Cv_{\perp}$ and $\Cv_{\parallel}$, we highlight the importance of decomposing the vector $\partial \Bv/\partial t$ into components parallel and perpendicular to the magnetic field. First, the topology of the field, which can be found from the unit tangent vector ${\Tv} = \Bv/\vert \Bv\vert$ or from the vector $\Gv$ (the $z$-derivative if the curve $\xv$), is only affected  by $\Cv_{\perp}$, i.e.
\begin{equation}
\label{eqn:topderiv}\frac{\partial{\Tv}}{\partial t} =-\eta\frac{ \Cv_{\perp}}{\vert {\bf B}\vert},\quad \frac{\partial \Gv}{\partial t}= -\eta\sigma\frac{\Cv_{\perp}B_z- (\Cv_{\perp}\cdot\ev_z)\Bv}{B_z^2},
\end{equation}
where the second expression uses the fact we can write $\Gv = \sigma\Bv/B_z$. By contrast, the magnetic field's magnitude is only changed by $\Cv_{\parallel}$,
\begin{equation}
\pder{\vert \Bv\vert}{t} =-\eta\frac{\Cv_{\parallel}\cdot \Bv}{\vert \Bv\vert}.
\end{equation}
These properties indicate that the change in magnetic winding is \emph{only} affected by $\Cv_{\perp}$, whilst the change in helicity depends on both $\Cv_{\perp}$ and $\Cv_{\parallel}$. To give compact representations of their changes we define the operator 
\begin{equation}
\Wv(\Nv) = \frac{1}{2\pi}\int_{S_z}\Nv(\yv)\times\frac{\rv}{\vert\rv \vert}\,\d^2 y,
\end{equation}
for any vector field $\Nv$, {e.g.} $\Cv= \Wv(\Gv)$ and $\Av^W= \Wv(\Bv)$ are examples. Then the change in helicity is
\begin{align}
\pder{H^W}{t} &=\int_{\Omega}\pder{\Bv}{t}\cdot \Wv(\Bv)\,\d V + \int_{\Omega}\Bv\cdot \Wv\left(\pder{\Bv}{t}\right)\,\d V, \nonumber\\
&=-\eta\left[ \int_{\Omega}(\Cv_{\perp} + \Cv_{\parallel})\cdot \Wv(\Bv)\,\d V + \int_{\Omega}\Bv\cdot\Wv\left( \Cv_{\perp} + \Cv_{\parallel}\right)\,\d V\right],
\end{align}
where we have used (\ref{windh}) in conjunction with (\ref{eqn:bderiv}). Similarly, using (\ref{eqn:windprod}) in conjunction with (\ref{eqn:topderiv}) we find
\begin{align}
\pder{L}{t}=-\eta\left[ \int_{\Omega}\sigma(\xv)\left(\frac{\Cv_{\perp}B_z- (\Cv_{\perp}\cdot\ev_{z})\Bv}{B_z^2}\right)\cdot \Wv(\Gv)\,\d V \right. \nonumber\\
\left. + \int_{\Omega}\Gv\cdot\Wv\left( \sigma(\yv)\frac{\Cv_{\perp}B_z- (\Cv_{\perp}\cdot\ev_{z})\Bv}{B_z^2}\right)\,\d V\right].
\end{align}
Using (\ref{eqn:flh}) and (\ref{field_line_wind})  in conjunction with (\ref{eqn:bderiv}) and (\ref{eqn:topderiv}), the changes in field line helicity and field line winding are
\begin{align}
\pder{{\wh}}{t} &=-\eta\left[\sum_{i=1}^{n} \int_{z_i^{\rm min}}^{z_i^{\rm max}}\sigma(\xv_i)\left(\frac{\Cv_{\perp i}B_{zi}- (\Cv_{\perp i}\cdot\ev_z)\Bv_i}{B_{zi}^2}\right)\cdot\Wv(\Bv)\,\d z\right. \nonumber \\
& \left. + \sum_{i=1}^{n} \int_{z_i^{\rm min}}^{z_i^{\rm max}}\Gv_i\cdot \Wv(\Cv_{\perp} + \Cv_{\parallel})\,\d z\right],
\end{align}
\begin{align}
\pder{{\cal L}}{t} &= -\eta\left[\sum_{i=1}^{n}\int_{z_i^{\rm min}}^{z_i^{\rm max}}\sigma(\xv_i)\left(\frac{\Cv_{\perp i}B_{zi}- (\Cv_{\perp i}\cdot\ev_z)\Bv_i}{B_{zi}^2}\right)\cdot\Wv(\Gv)\,\d z\right. \nonumber\\ 
&\left.+ \sum_{i=1}^{n}\int_{z_i^{\rm min}}^{z_i^{\rm max}}\Gv_i\cdot \Wv\left( \sigma(\yv)\frac{\Cv_{\perp}B_z- (\Cv_{\perp}\cdot\ev_z)\Bv}{B_z^2}\right)\right]\,\d z. 
\end{align} 
Note that there is no contribution from derivatives of the integral boundaries  (which would result from the Leibniz rule) as contributions cancel from connected integrals.

With these expressions in place, we can now review what physical effects will alter these quantities. By considering equations (\ref{eqn:bderiv}) to (\ref{eqn:bderiv_perp}),  the terms which change \emph{only the field strength} and, hence, only affect the helicity are
\begin{enumerate}
\item{The square of the strength of the axial current  relative to the magnetic field strength: $\alpha^2$.}
\item{Gradients of the relative strength of the Lorentz force along the direction of the Lorentz force: ${\nabla\lambda\cdot \hat{\Fv}_l}$.
}\item{
The rate of rotation of the Lorentz force around the curve:
\begin{equation}
\lambda\omega_b=\lambda\frac{\nabla\times \Bv_{fl}\cdot\Bv }{\vert \Bv \vert^2}.
\end{equation}
(since $\nvec$ and $\Fv$ form an orthogonal pair spanning the plane normal to $\Bv$) 
}
\end{enumerate} 
Terms (i-ii) represent variations of the relative strength of the Lorentz force and the axial current. Term (iii) accounts for the rotating geometry of the Lorentz force vector along a field line.
Terms which change \emph{only the field topology}, and which affect the winding, are
\begin{enumerate}
\item{Gradients of the axial current normal to the magnetic field: $\nabla\alpha\times {\Bv}$.}
\item{The product of the relative strength of the axial current and Lorentz force: $\alpha\lambda$. That is to say if both axial current and Lorentz forces are present there must be some change in topology.}
\item{Gradients of the relative strength of the Lorentz force along the direction of the magnetic field: ${\nabla\lambda\cdot \Bv}$.}
\item{
The rotation of field $\Bv$ around the direction of the Lorentz force:  $\lambda\omega_f$. 
}
\item{The rotation of the pair $(\Bv,\hat{\Fv}_l)$ around $\nvec$: $\lambda\omega_{fl}$.}
\end{enumerate}
Terms (i-iii) represent variations of the relative strength of the Lorentz force and the axial current. Terms (iv-v) account for the varying (relative) geometry of the fields $\Bv$ and $\Fv$. These last two components are difficult to visualize but, as we shall see in the following toy examples, are potentially significant {in magnetic flux ropes}.
\subsubsection{Toy examples}
We consider  simple toy examples of a  radially symmetric magnetic field in a cylindrical geometry ({i.e.} a simple flux rope model).  The field is assumed to take the form
\begin{equation}
\label{eq:simpfield}
\Bv = B_{\theta}(r)\ev_{\theta} + B_z(r)\ev_{z}.
\end{equation}
where $(r,\theta,z)$ is a cylindrical coordinate system with unit vectors $\{\ev_{r},\ev_{\theta},\ev_{z}\}$, $z\in[0,1]$ and $r\in[0,r_m]$. Due to the cylindrical symmetry and the fact the magnetic field has no $\ev_{r}$-component, the field lines are helical curves lying on concentric cylinders of radius $r$. The ratio $B_z/B_{\phi}$ is their helical pitch. Thus we can study all field line topological quantities as functions of $r$.

The curl of (\ref{eq:simpfield}) is
\begin{align}
\label{eqn:simpcurl}
\nabla \times \Bv &= -  \deriv{B_z}{r}\ev_{\theta} + \left(\deriv{B_{\theta}}{r} +\frac{B_{\theta}}{r} \right)\ev_{z}.
\end{align}
Since this quantity has no $\ev_{r}$-component, the Lorentz force must be directed along $\ev_{r}$-direction, {i.e.} $\hat{\Fv_l}=\ev_{r}$. Thus $\nvec = \Bv\times\ev_{r}= B_z\ev_{\theta} - B_{\theta}\ev_{z}$. Then, if we specify the functions $ \lambda(r)$ and $\alpha(r)$, we can equate equations (\ref{eqn:simpcurl}) and (\ref{curlexp2}) to obtain the following ordinary differential equations 
\begin{equation}
\deriv{B_z}{r}+\alpha B_{\theta}+\lambda B_z=0,\quad \deriv{B_{\theta}}{r} +\frac{B_{\theta}}{r}-\alpha B_z+\lambda B_{\theta}=0,
\end{equation}
which, if solved, give the field components $B_z$ and $B_{\theta}$ with the required $\lambda$ and $\alpha$ behaviour. {In this study we consider the initial conditions $B_{\theta}(0)=0$ and $B_z(0)=1$. The condition on $B_{\theta}$ means that $\lambda$ must be zero at $r=0$ for a valid solution.} 

Using the fact that 
\begin{equation}
\nabla \times \nvec =\left(\alpha-\frac{2 B_{\theta}B_z}{r B^2}\right)\Bv_{f\perp} + \left(-\lambda+\frac{B_z^2-B_{\theta}^2}{r B^2}\right)\Bv,
\end{equation}
we have, using equation (\ref{eqn:bderiv}),
\begin{equation}
\label{dbdtsym}
\pder{\Bv}{t} = -\eta\left[\left(2\alpha\lambda - \deriv{\alpha}{r} -\lambda \frac{2 B_{\theta}B_z}{r B^2}\right)\nvec + \left(\deriv{\lambda }{r} -\lambda ^2+\alpha^2 + \lambda \frac{B_z^2-B_{\theta}^2}{r B^2}\right)\Bv \right].
\end{equation}
We will consider cases where the field strength $|\Bv|$ and its axial twisting $\alpha$ are $\mathcal{O}(1)$. We further assume that  $\lambda\ll \mathcal{O}(1)$ but that its gradient $\d\lambda/\d r$ is $\mathcal{O}(1)$. With these assumptions, we can approximate the behaviour away from the core as 
\begin{equation}
\label{dbdtsymsmalllorentz}
\pder{\Bv}{t} \approx -\eta\left[- \deriv{\alpha}{r}\nvec + \left(\deriv{\lambda}{r}+\alpha^2\right)\Bv \right].
\end{equation}
From this we can see that an axial current through $\alpha$ and a Lorentz force gradient though the derivative $\d \lambda/\d r$ will tend to alter the magnetic field strength. The field line topology, however, can only change if there are gradients in the axial current. The terms 
\begin{equation}
 \lambda \frac{2 B_{\theta}B_z}{r B^2}\,\mbox{ and }\, \lambda \frac{B_z^2-B_{\theta}^2}{r B^2},
\end{equation}
can be prominent near the core centre, even if $\lambda$ is small {(the condition that $\lambda(0)=0$ means the second term is bounded)}. This discussion highlights the fact that small and local variations in the geometry of the magnetic field and the Lorentz force can lead to significant changes in helicity and winding, even if the magnitude of the Lorentz force is small. Since the the helicity and winding quantities involve integration over the plane, they can have a non-local effect on field lines away form the core.

\subsubsection{Force-free example}

\begin{figure}
\subfloat[]{\includegraphics[width=7cm]{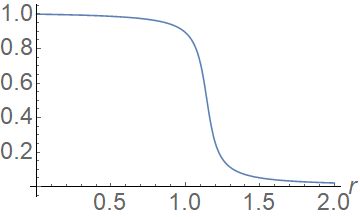}}\quad\subfloat[]{\includegraphics[width=8cm]{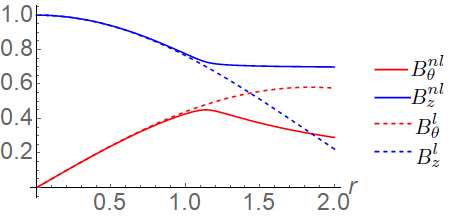}}
\caption{\label{fffr} Distributions of quantities associated with the fields $\Bv^l$ and $\Bv^{nl}$. (a) The axial twisting distribution $\alpha(r)$ for $\Bv^{nl}$. (b) The $B_z$ and $B_{\theta}$ components of the fields $\Bv^l$ (dashed) and $\Bv^{nl}$ (solid).}
\end{figure}

\begin{figure}
\subfloat[]{\includegraphics[width=7cm]{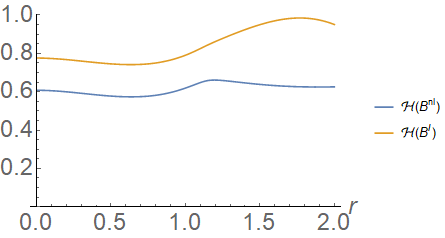}}\quad\subfloat[]{\includegraphics[width=7cm]{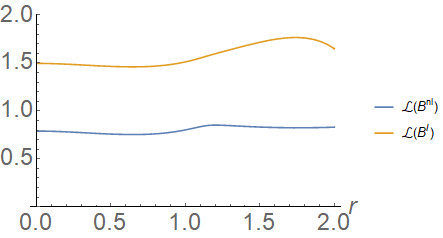}}\quad  \subfloat[]{\includegraphics[width=7cm]{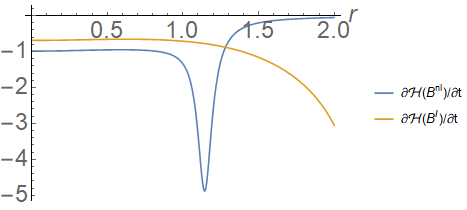}}\quad \subfloat[]{\includegraphics[width=7cm]{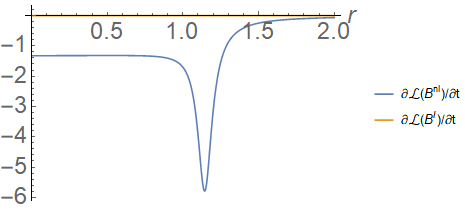}} 
\caption{\label{fig:forcefreedists} Field line helicity and winding distributions and their temporal changes for force-free fields. (a) ${\cal H}(r)$  for the fields $\Bv^l$ and $\Bv^{nl}$. (b) ${\cal L}(r)$  for the fields $\Bv^l$ and $\Bv^{nl}$. (c) $\partial {\cal H}/\partial t$  for the fields $\Bv^l$ and $\Bv^{nl}$, as a function of $r$. (d) $\partial {\cal L}/\partial t$ for the fields $\Bv^l$ and $\Bv^{nl}$, as a function of $r$. {Note that $\partial {\cal L}(\Bv^l)/\partial t=0$ for this case, as expected from Theorem \ref{thrm1}.}}
\end{figure}

We present some numerical results to highlight the effect of varying the $\alpha$ and $\lambda$ functions. First we consider force-free fields ($\lambda=0$).  We consider two fields on a domain $r\in[0,2]$
, a linear force-free field (constant $\alpha$) $\Bv^l$ and a nonlinear force-free field $\Bv^{nl}$ with a decaying axial current, specifically
\begin{equation}
\alpha(r) = \frac{\tan ^{-1}(20-17.5 r)+\frac{\pi }{2}}{\frac{\pi }{2}+\tan ^{-1}(20)}.
\end{equation}
This distribution is shown in Figure \ref{fffr}(a). It has a region at the magnetic field's core which has almost constant axial current which then drops of sharply with a significant gradient.  The magnetic field components for the two cases are shown in Figure \ref{fffr}(b). The major feature is the expected decay of the $B_{\theta}$ component for the field $\Bv^{nl}$ .

The field line helicities ${\cal H}(r)$  for both fields are shown in Figure \ref{fig:forcefreedists}(a). In both cases there is some variation but only as a relatively small percentage of the total value. There is less field line helicity in the nonlinear field as its twist decays where the linear case does not. The field line winding distributions, shown in Figure \ref{fig:forcefreedists}(b), are qualitatively very similar.  

In the following we set $\eta=1$. The changes $\partial {\cal H}/\partial t$  are shown in Figure \ref{fig:forcefreedists}(c) and the distributions differ significantly for both fields. In the inner core, $ r\in[0,0.8]$, where the fields are very similar, the distributions for the linear and nonlinear cases show a (radially) constant rate of loss. After this, the gradient $\d \alpha/\d r$ drives a rapid sipke in the decay of the nonlinear field's ${\cal H}$ distribution, resulting from the changing field topology related to the helical relaxation of the field. This decay then drops to zero as the twisting  decays. In the linear case there is no such spike but a steady increase towards the edge of the cylinder. This behaviour is due to the fact that there is a continual $\alpha^2$ decay and the fact that the length of the field lines increases towards the edge of the field - since $B_z$ decreases and $B_{\theta}$ increases (Figure \ref{fffr}(b)) the field line curves become increasingly tightly-coiled helices. Intriguingly, despite the fact that the two fields' ${\cal H}$ distributions are qualitatively similar, their temporal evolution is not. 

In Figure \ref{fig:forcefreedists}(d) we see the change in winding  $\partial {\cal L}/\partial t$, qualitatively, nearly identical to $\partial \wh/\partial t$ for the nonlinear field, owing to the dominance of the topology-changing $\d \alpha/\d t$ gradient. There is no change in the linear force-free field as expected from Theorem \ref{thrm1}.

\subsubsection{Adding a Lorentz force}
\begin{figure}
\subfloat[]{\includegraphics[width=7cm]{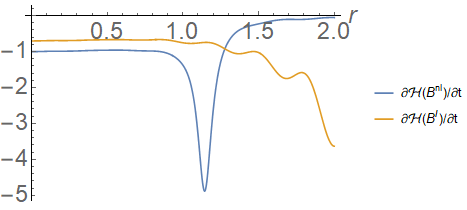}}\quad \subfloat[]{\includegraphics[width=7cm]{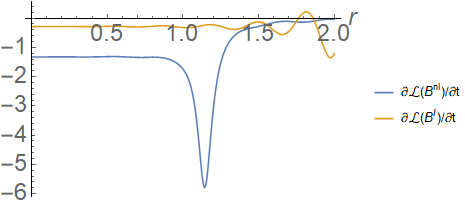}} 
\caption{\label{fig:lorentzdists} Temporal changes in  the field line helicity and winding distributions for fields with a weak and highly oscillatory Lorentz force. (a) Changes $\partial {\cal H}/\partial t$  for the fields $\Bv^l$ and $\Bv^{nl}$, as a function of $r$. (b) Changes $\partial {\cal L}/\partial t$ for the fields $\Bv^l$ and $\Bv^{nl}$ , as a function of $r$.}
\end{figure}
We now consider the effects of including a small Lorentz force, through $\lambda$, of the form
\begin{equation}
\lambda = 0.05\sin(20 r).
\end{equation}
This choice implies that the Lorentz force is between one and two orders of magnitude smaller than the magnetic field strength but that its gradient is $\mathcal{O}(1)$.

Figure \ref{fig:lorentzdists} displays $\partial \wh/\partial t$ and $\partial \cl/\partial t$  after re-solving for the fields $\Bv^l$ and $\Bv^{nl}$ with the above choice of $\lambda$. For brevity, we do not plot the distributions of ${\cal H}$ or ${\cal L}$ as they are very similar to those in Figures \ref{fig:forcefreedists}(a) and (b). The distributions for $\Bv^{nl}$ are also almost identical to those of the force-free case (these are dominated by the gradient in $\alpha$) and the major changes appear for $\Bv^l$. The $\partial \wh(\Bv^l)/\partial t$ distrubution in Figure \ref{fig:lorentzdists}(a) has the same general trend as that in Figure \ref{fig:forcefreedists}(c) but with oscillation added to it. The $\partial \cl(\Bv^l)/\partial t$ distribution in Figure \ref{fig:lorentzdists}(b) now exhibits a constant negative rate of field line winding in the inner region $r\in[0,1]$. By reperforming the calculations without the various terms it was established this change arises due to the  changing geometry and magnitude of the Lorentz force along the field through the term $2\alpha \lambda$ in equation (\ref{dbdtsym}), which leads to a significant contribution from the tightly coiled curves towards the flux rope's edge.
A pronounced oscillatory component of $\partial \cl(\Bv^l)/\partial t$ develops in the outer region $r>1$. This is due to the oscillation in the Lorentz force magnitude $\lambda$ and the effect is magnified as field lines become increasingly tightly coiled, to the point where there is a region where the winding is being increased rather than decaying. 

\subsubsection{General conclusions}
We have demonstrated that there can be significant differences between the evolution of the field line winding and helicity. Further, we have linked these differences to various physical properties of the field, specifically components and magnitudes of axial currents and the Lorentz force. We have also given some examples of how gradients in both the axial current and weak Lorentz forces might affect magnetic fields with flux rope-type geometries. Of course, our analysis is only strictly valid when the Lorentz force is suitably small and decoupled from the fluid pressure gradient and inertia terms. Such a situation would be valid, for example, in the solar corona (at least before the onset of significant plasma motion). For a discussion of the effects of non-magnetic terms in quasistatic decay, the reader is directed to \cite{low1974resistive}.

\subsection{Flux emergence and submergence}
The emergence of magnetic field into the solar atmosphere is one of the main drivers of solar activity. It has long been appreciated that helicity is important for solar eruptions, so understanding how topologically complex field emerges into the solar atmosphere provides us with information on what kinds of eruptions can form. Recently, MacTaggart and Prior \cite{prior2019interpreting, mactaggart2020helicity} performed a detailed analysis on how helicity and winding is transported into the solar atmosphere in flux emergence simulations. In order to understand what topological information is being transported by emergence, we now present a simple but informative model of flux emergence that illustrates one of the main results of \cite{prior2019interpreting, mactaggart2020helicity}.

\subsubsection{Magnetic field construction}
Emerging magnetic fields in the solar atmosphere have two footpoints (sunspots) which move apart from each other until a certain distance. This behaviour suggests that the basic geometry of the emerging magnetic field is toroidal \cite{hood2012flux}. To construct a toroidal flux tube, we need to define a magnetic field in a toroidal domain. That is, we wish to create a divergence-free field with a field line structure of our choosing in a toroidal domain. 

 To construct such a field, we will consider transforming a magnetic field defined in a cylinder to one in a semi-torus. The reason for this is that it is much easier to define magnetic flux tubes of arbitrary complexity in cylinders.  

 For a toroidal flux tube, the axis of the tube can be written as the parametric curve
 \be
 \rv(s) = -R\cos(s/R)\ev_1 + (R\sin(s/r)+z_0)\ev_z,
 \en
 where $s$ is the arclength along the tube axis, $R$ is the major radius of the torus and $z_0$ is the height at which the footpoints of the tube meet the photosphere. The unit tangent vector of $\rv(s)$ is $\Tv = \d\rv(s)/\d s$ and by taking the (unit) normal and binormal vectors, $\dv_1$ and $\dv_2$ say, we can define an orthonormal triad $\{\Tv,\dv_1,\dv_2\}$. This basis can be used to define a tubular coordinate system through the mapping
 \be
 \fv(s,x_1,x_2) = \rv(s) + x_1\dv_1 + x_2\dv_2,
 \en
 with
 \[
 \dv_1 = \cos(s/R)\ev_1 -\sin(s/R)\ev_z, \quad \dv_2 = \ev_2.
 \]
 The metric tensor can be written as
 \be
 g_{ij} = \frac{\partial\fv}{\partial q^i}\cdot\frac{\partial\fv}{\partial q^j},
 \en
 where $i,j=1,2,3$ with $q^1=s$, $q^2=x_1$ and $q^3=x_2$. The Jacobian of the mapping is given by

 \be
 \sqrt{g} = \sqrt{\det(g_{ij})} = \frac{R-x_1}{R}.
 \en
 If $\Bv$ represents a magnetic field, it must obey the solenoidal constraint, i.e.
 \be
 \frac{\partial}{\partial q^i}(\sqrt{g}B^i) = 0.
 \en
 It follows, therefore, that any magnetic field defined in a cylinder can be converted to one in terms of the above tubular coordinates by simply dividing the components by $\sqrt{g}$.
 
For our application here, we will consider a twisted flux tube that has a small concentration of even higher twist at the centre of the tube, localized at the apex. To construct such a magnetic flux tube, we first write the magnetic field as 
\be
\Bv = B_s\frac{\partial\fv}{\partial s} + B_1\frac{\partial\fv}{\partial x_1} + B_2\frac{\partial\fv}{\partial x_2}.
\en
Then, following \cite{prior2019interpreting}, the magnetic field with twist at its centre $(x_1,x_2)=(0,0)$ can be written as
\be\label{tube_field}
\Bv = \frac{b_0}{\sqrt{g}}\left[\frac{2}{w}\exp\left(-\frac{x_1^2+x_2^2}{w^2} -\frac{(z-z_0)^2}{l^2}\right)+t_w\right]\left(-x_2\frac{\partial\fv}{\partial x_1} + x_1\frac{\partial\fv}{\partial x_2}\right) + \frac{b_0}{\sqrt{g}}\frac{\partial\fv}{\partial s},  
\en
where $b_0$ is the axial field strength, $t_w$ is the background twist and $w$ and $l$ control the size of the localized twist region. The flux tube is then defined as the expression given in equation (\ref{tube_field})  for $x_1^2+x_2^2\le a^2$ and zero for  $x_1^2+x_2^2> a^2$.  For the choice $b_0=5$, $w=0.9$, $l=0.1$ and $t_w=0.1$, the magnetic flux tube is visualized in Figure \ref{fig_tube}. Blue field lines illustrate twisted field lines near the boundary of the tube. Red field lines illustrate the localized region of strong twist at the centre of the tube at the apex. 
 \begin{figure}[h]
 \centering
{\includegraphics[width=9cm]{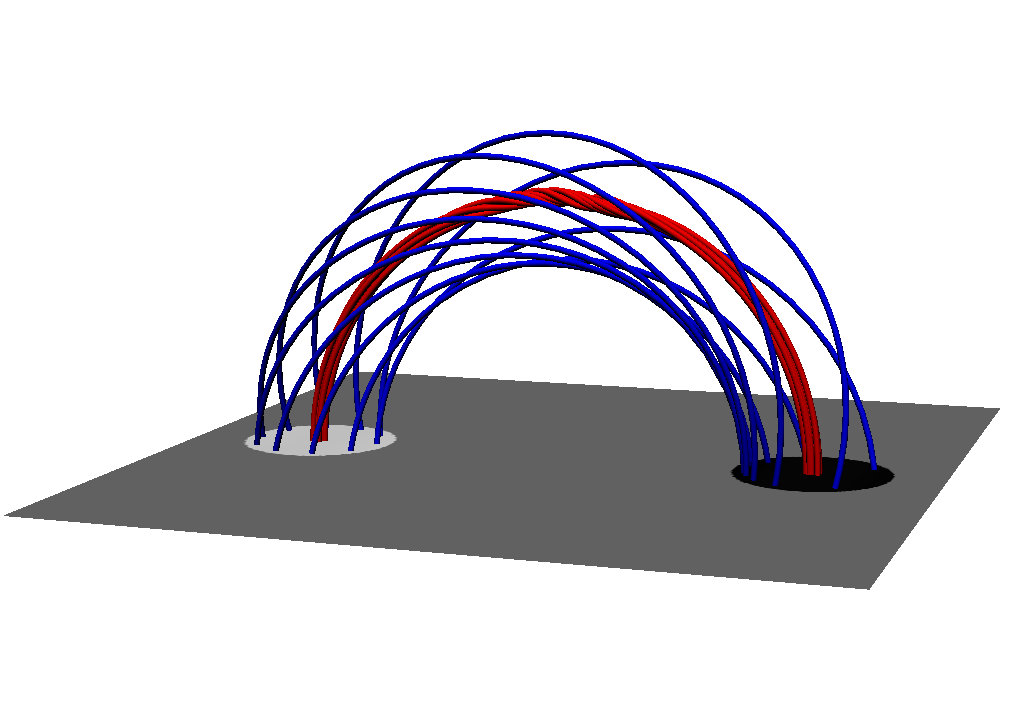}}
\caption{A toroidal shaped magnetic flux tube. Blue field lines are traced at the edge of the tube. Red field lines are traced a the centre of the tube and show a localized region of strong twist near the apex of the tube. The plane is at `height' $z_0$, on which white indicates positive magnetic field and black, negative. }\label{fig_tube}
\end{figure}

 \subsubsection{Helicity and winding inputs}
 To mimic emergence and submergence, we push the magnetic field defined by equation (\ref{tube_field}) through a horizontal boundary. For the emergence phase, we push the (rigid) magnetic flux tube through the photospheric plane until this plane is coincident with $z_0$ (as illustrated in Figure \ref{fig_tube}). For the submergence phase, the magnetic field is pulled back down through the photosphere at the same rate as it emerged. The submergence phase stops before the strong twist region, illustrated in Figure \ref{fig_tube} by the red field lines, submerges beneath the photosphere.
 
 We calculate $H_R$ and $L_R$ varying in time and normalize these quantities by their maxima, for ease of comparison. In determining these quadratures we impose a cut-off of $|B_z| = 0.01$ - any value beneath this cut-off is ignored to avoid numerical errors in the winding calculation \cite{prior2019interpreting, mactaggart2020helicity}. The results are displayed in Figure \ref{time_int}.
 
  \begin{figure}[h]
 \centering
{\includegraphics[width=9cm]{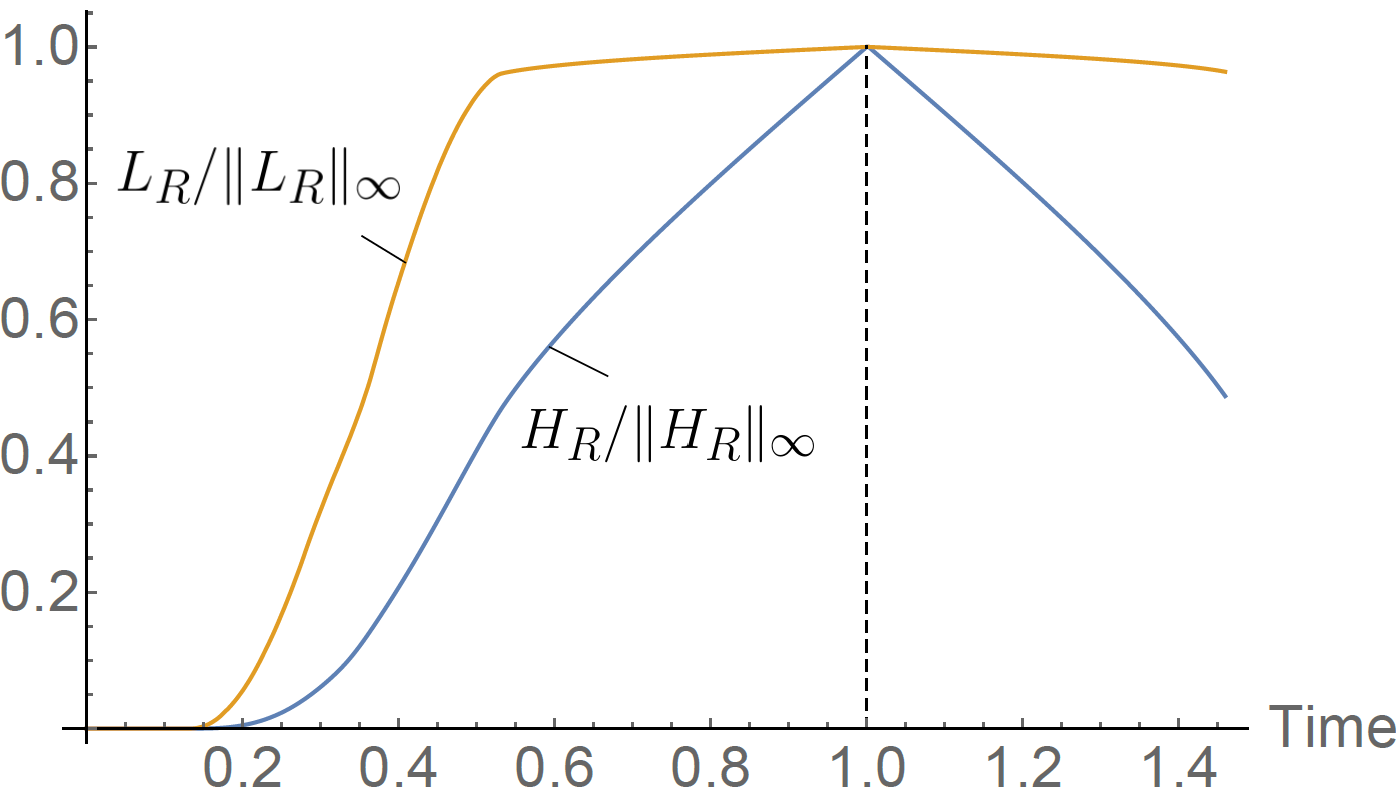}}
\caption{The helicity and winding time-integrated fluxes. Each quantity is weighted by its maximum value for presentation purposes. The vertical line separates the emergence and submergence phases.}\label{time_int}
\end{figure}
Focussing on the winding input in the emergence phase first, there is sharp increase followed by a levelling out after $t\approx 0.5$. The sharp rise is dominated by the detection of the localized region of strong twist. After this region has passed through the photosphere, the remaining twist in the emerging magnetic field is much weaker and its effect on the winding is thus much weaker. The helicity input in the emergence phase also shows an increase in time, with the gradient reaching a maximum just before $t\approx 0.5$ and then taking a smaller and almost constant value. The part of the curve with the highest gradient corresponds to the localized region of strong twist passing though the photosphere. The rest of the curve, with lower gradients, corresponds to the more weakly twisting field passing through the photosphere. Since helicity is weighted by magnetic flux, the identification of the input of topologically complex magnetic field above the photosphere is not as simple to detect as it is for the winding input. 

In the submergence phase, where part of the flux tube is pushed back down through the photosphere, the behaviour of the helicity and the winding is very different. The part of the flux tube that is pushed back down through the photosphere corresponds to the more weakly twisted field (the blue field lines in Figure \ref{fig_tube}). Since the helicity is weighted by magnetic flux, even if the topological complexity of the field lines is weak in this region, the helicity can still be large. Therefore, a large reduction is seen in the helicity due to submergence. For the winding, however, we see very little change due to submergence. This is because the winding is not biased by magnetic flux and so if only weak field line topology submerges, only a marginal decrease in the winding is recorded.  

This example has been set up to show that magnetic winding can identify particular regions of topological complexity which the helicity cannot. Thus a combination of winding and helicity can provide a more complete picture of the structure of emerging magnetic fields and, importantly, both quantities can be calculated in observations. {More realistic emergence studies in  \cite{prior2019interpreting, mactaggart2020helicity} further demonstrate the potential of the winding input rate time series as a metric for immediate event detection, the event being the emergence/submergence of sub-regions of highly twisted field emerging into the corona.}

\subsection{Influence of a moving boundary} 
So far, we have considered winding and helicity fluxes through a stationary and flat boundary $P$ that represents the solar photosphere. It is likely, however, that in solar observations, the magnetic field components recorded in planar magnetograms are actually at different heights. That is, the photosphere is likely to be a moving and non-uniform surface. In that case, what are the consequences for the  formulae that we have presented? For the magnetic winding flux, we will show that it is \emph{independent} of the geometry of a moving boundary and so the calculations using a flat plane can still be used in this situation. 

For the application of flux emergence through the photosphere, we will consider a photospheric boundary whose projection onto a horizontal plane is one to one,  as in Figure \ref{fig_moving}. We can consider more complicated boundary surfaces that fold back on themselves by defining coordinate systems on them and mapping them to flat planes \emph{via} conformal (angle-preserving) mappings. This process, however, is not necessary for the applications we have in mind and the following (simpler) description will be suitable for our purposes.  
\begin{figure}[h]
  \centering
\begin{tikzpicture}
\draw[thick]   (-4,0) node[above] {$P'$} -- (0.01,0) ;
\draw[thick] (4-0.01,0) -- (8,0); 
\draw[thick] plot[domain=0:4*pi, samples=80]  (\x/pi,{sin(\x r)}); 
\draw[thick] (-4,-3) node[above] {$P$} -- (8,-3) ;
 \draw[->,thick] (6,-3) -- (6,-2.5)  node[right] {$\boldsymbol{e}_z$};
 \draw[->,thick] (6,0) -- (6,0.5)  node[right] {$\boldsymbol{e}_z$};
 \draw[->,thick] (2.82,0.5) -- (3.22,0.65)  node[right] {$\boldsymbol{N}$};
  \fill (0.5,1)  circle[radius=2pt];
  \fill (2.2,0.6)  circle[radius=2pt];
  \draw[thick,dashed]   (0.5,1) -- (2.2,0.6);
  \draw[thick,dashed]   (0.5,1) -- (0.5,-3);
  \draw[thick,dashed]   (2.2,0.6) -- (2.2,-3);
    \end{tikzpicture} 
    \caption{A representation of a non-uniform boundary $P'$ (which can also change in time) and the standard flat boundary $P$. Horizontal surfaces are orthogonal to $\ev_z$ and $\Nv$ represents the normal to $P'$, which is different at different locations. Two points on $P'$ are highlighted together with their projections onto $P$.}\label{fig_moving}
\end{figure}
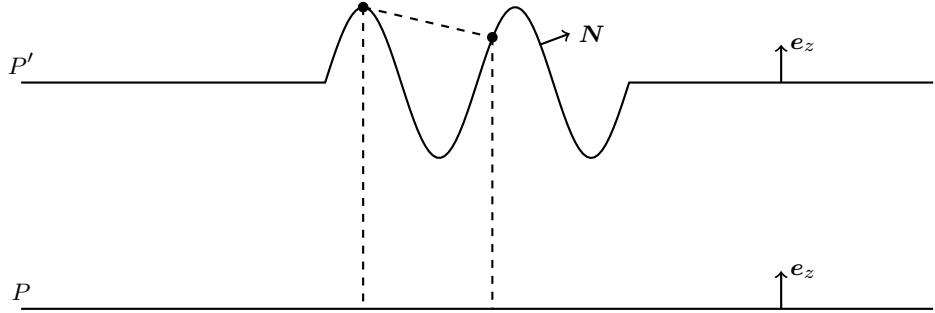

Let $P'$ represent a non-uniform simple surface that evolves in time. As shown in Figure \ref{fig_moving}, $P'$ is extended so that its boundary is horizontal. The purpose of it is to define an axis, the vector $\ev_z$ in this case, about which the winding of field lines is measured. Since winding is a two-dimensional (2D) measure, it requires a specific fixed normal vector. The vector $\Nv$ would not be suitable as it changes from point to point and $\Nv\equiv\ev_z$ only near the boundary of $P'$.  

Consider two points on $P'$, as shown in Figure \ref{fig_moving}. These two points (and any others on $P'$) are locations where field lines can intersect the surface. This could be an instantaneous moment of flux emergence, the movement of the boundary itself or a combination of both. The non-uniform boundary determines what part of the field line is used to calculate winding but its geometry does not enter explicitly into the winding rate calculation. This fact is simply because the relative angle of these points about a given origin is a purely 2D calculation, as is evident from 
\be
\frac{\d\theta}{\d t} = \frac{\d}{\d t}\arctan\left[\frac{(\xv-\yv)\cdot\ev_2}{(\xv-\yv)\cdot\ev_1}\right] = \ev_z\cdot\frac{(\xv-\yv)}{|\xv-\yv|^2}\times\left(\frac{\d\xv}{\d t}-\frac{\d\yv}{\d t}\right), 
\en
where $\xv$ and $\yv$ are horizontal vectors. Thus, to calculate the winding rate, every point on $P'$ can be projected orthogonally onto $P$ and the standard formula can be used, i.e.
\be
\frac{\d L_R}{\d t} = -\frac{1}{2\pi}\int_{P'}\int_{P'}\frac{\d}{\d t}\theta(\xv,\yv)\sigma(\xv)\sigma(\yv)\,\d^2x\,\d^2y = -\frac{1}{2\pi}\int_{P}\int_{P}\frac{\d}{\d t}\theta(\xv,\yv)\sigma(\xv)\sigma(\yv)\,\d^2x\,\d^2y.
\en
In this sense, the winding rate through $P'$ does not depend explicitly on the geometry of $P'$. The geometry of $P'$ selects what part of the field line is recorded but does not add any weighting to the winding rate calculation. 

{The input of magnetic helicity through the photosphere is dependent on the shape of $P'$. This is simply because helicity depends of the magnetic flux through $P'$ which depends on the shape of $P'$. Despite this dependency, however, calculations of helicity flux through different moving boundaries in flux emergence simulations \cite{mactaggart2020helicity} have demonstrated that the qualitative behaviour of helicity input in time is not strongly affected by the choice of boundary (assuming it is not placed unrealistically far from the photosphere region).}

\section{Summary}
In this work we have shown that magnetic winding is the topological underpinning of magnetic helicity for open magnetic fields. In essence, magnetic winding can be thought of as helicity without the magnetic flux weighting, thus representing a more fundamental topological quantity. Through examples of quasistatic resistive decay and flux emergence, we have demonstrated that, despite their close connection, helicity and winding can behave differently. Therefore, magnetic winding can provide different, and more detailed, information on field line topology than magnetic helicity. Due to the similarity, in terms of its mathematical expression, of winding to helicity, there are exciting opportunites to use winding practically, in both simulations and observations, to provide a deeper understanding of magnetic field topology.

\section*{Appendix}

The quasistatic induction equation is defined by the following identity:
\begin{equation}
\nabla\times (\lambda \nvec + \alpha\Bv) = \nabla\lambda\times \nvec + \lambda \nabla\times \nvec  + \nabla\alpha \times \Bv + \alpha\nabla\times \Bv.
\end{equation}
The aim is to decompose this into components parallel to the field and those perpendicular, which must either be along the direction of $ \nvec$ or along the direction of the Lorentz force  $\hat{\Fv}_l$. To do so we use the following identities:
\begin{align}
&\alpha\nabla\times \Bv= \alpha\lambda \nvec+\alpha^2\Bv,\\
&(\nabla\lambda\times\Bv_{f\perp})\cdot  \nvec = 0,\\
&(\nabla\lambda\times\Bv_{f\perp})\cdot \Bv =\nabla\lambda\cdot(  \nvec\times \Bv) ={\vert\Bv\vert^2}\nabla\lambda\cdot \hat{\Fv}_l,\\
&(\nabla\lambda\times \nvec)\cdot \hat{\Fv}_l = -{\nabla\lambda\cdot\Bv}.
\end{align}
Finally we consider the vector $\nabla\times \nvec$. This is the rotation of the vector normal to both the Lorentz force and the magnetic field. We can express this vector in the local orthogonal basis $(\Bv,\hat{\Fv},\nvec)$ as follows:
\begin{equation}
\nabla\times \nvec = \omega_b \Bv+\omega_f\hat{\Fv}_{l} + \omega_{fl}\nvec,
\end{equation}
where 
\begin{equation}
\omega_b = \frac{\nabla\times  \nvec\cdot{\Bv}}{\vert \Bv\vert^2} 
\end{equation}
represents the rotation of the Lorentz force vector around the field line,
\begin{equation}
\omega_f = \nabla\times \nvec\cdot{\hat{\Fv}_l},
\end{equation}
represents the rotation of field $\Bv$ around the direction of the Lorentz force, and 
\begin{equation}
 \omega_{fl}= \frac{\nabla\times \nvec\cdot{\nvec}}{\vert \Bv\vert^2},
\end{equation}
represents the rotation of the pair $(\Bv,\hat{\Fv}_l)$ around the direction of the field  $ \nvec$.

\enlargethispage{20pt}

\ethics{Not applicable.}

\dataccess{This article has no additional data.}

\aucontribute{CP contributed to mathematical modelling, performed numerical calculations, plotted the figures and reviewed the manuscript; DM contributed to mathematical modelling, performed numerical calculations and drafted the manuscript. Both authors approved the final version and agree to be accountable for all aspects the work.}

\competing{No competing interests.}

\funding{Not applicable.}

\ack{We thank Anthony Yeates for helpful discussions.}


\vskip2pc

%
%
%
%
%
%

\bibliographystyle{RS} 

\bibliography{windingBib} 

\end{document}